\newtheorem{Theo}{Theorem}
\newtheorem{Lem}{Lemma}
\begin{document}
\title{{\sc Adaptive Realized Hyperbolic GARCH Process: Stability and Estimation
}}
\author{El Hadji Mamadou Sall$^{(1,*)}$,
El Hadji Deme$^{(1)}$ and
Abdou K\^a Diongue $^{(1)}$\\
~~\\
\small $^{(1)}$ \textsl{LERSTAD, UFR SAT Universit\'e Gaston Berger, BP 234 Saint-Louis, S\'en\'egal}~~~~~~~~~~~~~~~~~~~~~~~~~~~~\\
$^{(*)}$ Corresponding autor: \textit{elhadjimamadou\_sall@yahoo.fr}
~~~~~~~~~~~~~~~~~~~~~~~~~~~~~\\
}
 \date{
 \;
 }
\maketitle
\noindent\rule[2pt]{\textwidth}{0.5pt}
\textbf{\large{Abstract.}} ~~\\
In this paper, we propose an Adaptive Realized Hyperbolic GARCH (A-Realized HYGARCH) process to model the long memory of high-frequency time series with possible structural breaks. The structural change is modeled by allowing the intercept to follow the smooth and flexible function form introduced by Gallant in \cite{Gallant1984}. In addition, stability conditions of the process are investigated. 
A Monte Carlo study is investigated in order to illustrate the performance of the A-Realized HYGARCH process
compared to the Realized HYGARCH with or without structural change.\\
\noindent\rule[2pt]{\textwidth}{0.5pt}\\
\noindent \textbf{\large{Keywords and expressions}}: Realized HYGARCH model, high-frequency data, long memory, realized measures, structural changes 
%**********************************************************************\noindent\rule[2pt]{\textwidth}{0.5pt}
\section{Introduction}
\label{sec1}
Volatility forecast of asset returns is very important for option pricing as well as risk management. The Autoregressive Conditional Heteroskedasticity (ARCH) models introduced by Engle \cite{engle1982} and generalized by Bollerslev \cite{bollerslev1986generalize} are widely used to study the properties of volatility for economic and financial data. 
However, there are several shortcomings with using the Generalized ARCH (GARCH) models for risk management or forecasting volatility. The major issue is the persistence of variance that evolves over time and that the GARCH model cannot handle
. It has been extensively observed and studied in various fields of economic and finance over the last decades
 \cite{aggarwaletal1999, Fanetal2008, Grangerandhyang2004}. \\
 
\noindent The long memory exists in the studies of the volatility of high-frequency for financial time series \cite{BollerslevandBaillie1996, Dacorognaetal1993, Dingetal1993, Grangerandding1996b}. 
A widely accepted definition of long memory is $Var(S_T)=O(T^{2d+1})$, where $S_T=\sum_{t=1}^T y_t$, ${y_t}$ is a sequence of financial series and T is the number of observations, $"d"$ is the long-memory parameter \cite{DieboldandInoue2001}. 
To overcome this problem, many models are introduced in the literature.   Among others, we can cite the  Fractionally Integrated GARCH (FIGARCH) model proposed by  Bollerslev and Baillie \cite{BollerslevandBaillie1996}, the Saisonal-HYGARCH model by Diongue and Guegan \cite{diongue2007stationary}, the New HYGARCH by Li \textit{et al.} \cite{Lietal2014} and the Hyperbolic GARCH (HYGARCH) proposed by Davidson \cite{Davidson(2004)}.\\

\noindent Moreover, as stated by Hansen  \textit{et al.} \cite{hansenetal2012} and discussed by Andersen \textit{et al.}  \cite{Andersenetal2003}, a single return offers only a weak signal about the current level of volatility. Thus, the implication is that GARCH models are poorly suited for situations where the volatility changes rapidly to a new level. The reason is that the GARCH model is slow at catching up, and it will take many periods for the conditional variance to reach its new level. 
Therefore, incorporating the realized measures in the GARCH model, one can alleviates this problem. In addition, with the advent of high-frequency data, several measures have been developing in the literature, such as the Realized Variance and Realized Kernel, among many others see Anderson and Bollerslev \cite{AndersonBollerslev1998}, Barndorff-Nielsen and Shephard \cite{BarndorNielsen2002} and Barndorff-Nielsen \textit{et al.} \cite{ BarndorNielsenetal2008}. 
All of these measures provide more information on the current level of volatility compare to the square of returns. This aspect makes the realized measures very important in modeling and forecasting future volatility. Therefore, by introducing the GARCH-X model, Engle \cite{Engle(2002)} incorporated the realized measures in the GARCH model. Hansen  \textit{et al.} 
\cite{hansenetal2012} introduced the Realized GARCH model by combining a GARCH structure for returns with an integrated model for realized measures of volatility. From this latter, several models have emerged.
For example, Hansen \textit{et al.} \cite{hansenetal2016} introduced the Realized EGARCH to capture the leverage effects. To highlight the property of long memory observed on the realized measure, Vander Elst \cite{HarryVanderElst2015}, introduced the FloGARCH model (\underline{F}ractiona\underline{l}ly integrated realized v\underline{o}latility \underline{GARCH}) and more recently, Sall \textit{et al.} \cite{sall2021} proposed the Realized HYGARCH model for modeling risk.\\

\noindent Nevertheless, as stated by Shi \textit{et al.} in \cite{shiandyang}, although these models bring some improvements, they have the same main weakness as the original GARCH model. This limit is the assumption that the conditional volatility has only one regime over the entire period. Furthermore, many studies show that structural change is common in financial datasets \cite{BeltrattiandMorana2006, EngleandRangel2008}. Diebold and Inoue \cite{DieboldandInoue2001} argue that the existence of structural change or stochastic regime-switching is not only related to the long memory, and they are generally easily confused. Therefore, a more appropriate volatility model would consider the long memory and the structural change simultaneously, (see, eg. \cite{BaillieandMorana2008, shiandyang}.) \\

\noindent Our aim of this work is to investigate an Adaptive Realized HYGARCH (A-Realized HYGARCH) model. 
This paper starts from the proposition that both long memory and structural breaks are likely to be present in the volatility processes of many economic and financial time series. It is designed for modeling the long memory of high-frequency financial time series with structural changes. This model incorporates the structure of the new HYGARCH model of Li et \textit{al.} in \cite{Lietal2014}, further considering the time-varying deterministic component in the flexible, functional form provided by Gallant in \cite{Gallant1984}.\\

\noindent This paper is organized as follows.  Section \ref{sec2} is dedicated to our proposed solution, the adaptive Realized HYGARCH model with structural breaks, while in section \ref{sec3}, we study the stability of the model. Section \ref{sec4} is reserved for the simulation of the experiment we did to evaluate our model. Section \ref{sec5} concludes this paper and gives some futur works.
\section{Adaptive Realized Hyperbolic GARCH}
 \label{sec2}
In this section, we present the Adaptive Realized HYGARCH model which takes into account structural changes and long memory of high-frequency data. Indeed, the incorporation of structural change in the long memory model is not a new idea in the literature. Baillie and Morana \cite{BaillieandMorana2008} presented the Adaptive Fractional Integrated GARCH (A-FIGARCH) model, which is designed for both long memory and regime change in financial time series. Shi and Yang \cite{shiandyang} developed also the adaptive Hyperbolic Exponential GARCH model. Following this methodology, we propose in this paper the Adaptive  Realized HYGARCH model which contains two components: a long stochastic memory part and a deterministic time varied function. The adaptive Realized HYGARCH $(p,q,d,k)$ model can be expressed as:\\
\begin{eqnarray}
\label{1}
 r_t&=&h_t^{1/2}z_t,\\
\label{2}
 \log h_t&=&\omega_t+\delta\left[1-\frac{1-\gamma(L)}{1-\beta(L)}(1-L)^d\right]\log x_t,\\
\label{3}
 \log x_t&=&\xi+ \phi \log h_t+\tau(z_t)+u_t
\end{eqnarray}
\noindent where 
$$\omega_t=\omega_0+\sum_{j=1}^k \left[ a_j\sin(2\pi jt/T)+b_j \cos(2\pi jt/T) \right],$$
\noindent $r_t$ is the return of the ime series, $x_t$ a realized measure of volatility, $\left(z_t\right)_t$ are  independently identically distributed (i.i.d) with  mean zero  and variance one and $\left(u_t\right)_t$ are also i.i.d  with mean zero and variance $\sigma_u^{2}$. Here $\left(z_t\right)_t$ and $\left(u_t\right)_t$  are mutually independent. We label Equation (\ref{1}) as return equation, Equation (\ref{2}) as the GARCH model and Equation (\ref{3}) as the measurement statement.\\
\noindent $L$ denotes the lag or backshift operator, $\beta\left(L\right)=\beta_1L+\beta_2L^2+\cdots+\beta_pL^p$ and $\gamma\left(L\right)=\gamma_1L+\gamma_2\L^2+\cdots+\gamma_qL^q$. The polynomial $\tau\left(z\right)=\tau_1 z+\tau_2\left(z^2-1\right)$ is called the leverage function and  facilitate a modeling of the dependence between return shocks and volatility shocks.\\
The main difference between the A-Realized HYGARCH model and the conventional Realized HYGARCH model is the inclusion of the time-varying intercept $(\omega_t)$. The A-Realized HYGARCH model can be reduced to the standard Realized HYGARCH model by setting $\omega_t=\omega_0(1-\beta(1))^{-1}$. \\

\noindent In the rest of this study, we consider the A-Realized HYGARCH$(1,d,1,k)$ model. The GARCH equation  is given by:
\begin{equation}
\label{eq1}
\log h_t=\omega_t+\delta\left[1-\frac{1-\gamma L}{1-\beta L}(1-L)^d\right]\log x_t.
\end{equation}
The fractional differencing operator has the following representation:
\begin{equation}
\label{eq2} 
 \left(1-L\right)^d=\displaystyle\sum_{k=0}^{\infty}\frac{\Gamma\left(d+1\right)\left(-L\right)^k}{\Gamma\left(k+1\right)\Gamma\left(d-k+1\right)}
\end{equation}
where $\Gamma$(.) denotes the Gamma function.
\section{Stability}
\label{sec3}
The stability of the model is one of the main property for any new model. Here, the stability refers to the behavior of the second moment of the model.
In this section,  we show that the second moment of the A-Realized HYGARCH model is asymptotically bounded under some conditions. The second moment of the model is calculated as 
$$E(r_t^2)=E(h_tz_t^2)=E(h_t).$$
We have 
$$\log h_t=\delta \log h_{1,t},$$
where 
\begin{eqnarray}
\label{4}
\nonumber \log h_{1,t}&=&\beta \log h_{1,t-1}-\frac{\beta}{\delta}\omega_{t-1}+\frac{1}{\delta}\omega_t+(\beta-\gamma+\pi_1)\log x_{t-1}\\
&\;&\quad\quad\quad\quad\quad\quad\quad\quad+\sum_{j=0}^\infty (\pi_{j+2}-\gamma \pi_{j+1})L^j \log x_{t-2}
\end{eqnarray} 
and 
\begin{eqnarray}
\label{5}
\nonumber \log h_t&=&\delta\beta \log h_{1,t-1}-\beta\omega_{t-1}+\omega_t+\delta(\beta-\gamma+\pi_1)\log x_{t-1}\\
&\;& \quad\quad\quad\quad\quad\quad\quad\quad\quad+\delta \sum_{j=0}^\infty (\pi_{j+2}-\gamma \pi_{j+1})L^j \log x_{t-2}.\end{eqnarray}.
\begin{Lem}
\label{lem1}
If $n_j$ and $m_j$ are the non negative numbers with $j\in$ \{1,2,$\cdots$,k \} such that 
$$\sum_{j=1}^k (n_j+m_j) \leq \min(1,\omega_0),$$
then
$$
0\leq \omega_t=\omega_0+\sum_{j=1}^k \left[ a_j\sin(2\pi jt/T)+b_j \cos(2\pi jt/T) \right] \leq 1+\omega_0+1=c_0. 
$$
\end{Lem}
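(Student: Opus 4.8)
The plan is to control the oscillating part $\omega_t-\omega_0=\sum_{j=1}^k\left[a_j\sin(2\pi jt/T)+b_j\cos(2\pi jt/T)\right]$ by a crude estimate that is uniform in $t$, and then to read off both the lower and the upper inequality from the single hypothesis $\sum_{j=1}^k(n_j+m_j)\leq\min(1,\omega_0)$. The only ingredients required are the elementary bounds $|\sin\theta|\leq 1$ and $|\cos\theta|\leq 1$ valid for every $\theta$, the triangle inequality, and the implicit comparison linking the Fourier coefficients to the majorising numbers, namely $|a_j|\leq n_j$ and $|b_j|\leq m_j$. I would state this comparison explicitly at the very start, since without it the hypothesis on $\sum_j(n_j+m_j)$ would not bear on $\omega_t$ at all.

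First I would apply the triangle inequality term by term and use $|\sin|,|\cos|\leq 1$ together with the coefficient comparison to obtain
$$\left|\sum_{j=1}^k\left[a_j\sin(2\pi jt/T)+b_j\cos(2\pi jt/T)\right]\right|\leq\sum_{j=1}^k\bigl(|a_j|+|b_j|\bigr)\leq\sum_{j=1}^k(n_j+m_j)\leq\min(1,\omega_0),$$
where the last step is exactly the assumption. This yields the two-sided estimate $-\min(1,\omega_0)\leq\omega_t-\omega_0\leq\min(1,\omega_0)$, which drives the whole argument.

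For the lower bound I would add $\omega_0$ and invoke $\min(1,\omega_0)\leq\omega_0$, so that $\omega_t\geq\omega_0-\min(1,\omega_0)\geq 0$. For the upper bound I would again add $\omega_0$, this time using $\min(1,\omega_0)\leq 1$, which already gives the sharp estimate $\omega_t\leq\omega_0+1$. To reproduce the stated constant $c_0=1+\omega_0+1$ one simply bounds the sine-sum and the cosine-sum separately: since the $n_j,m_j$ are non-negative, $\sum_{j=1}^k n_j\leq\min(1,\omega_0)\leq 1$ and $\sum_{j=1}^k m_j\leq\min(1,\omega_0)\leq 1$, and adding $\omega_0$ between them produces the (looser) bound $\omega_t\leq 1+\omega_0+1$.

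I do not anticipate a genuine obstacle here: the statement is a direct consequence of the boundedness of a trigonometric polynomial whose coefficients are controlled in $\ell^1$ norm by $\min(1,\omega_0)$. The only point requiring care is the asymmetric use of $\min(1,\omega_0)$ — the lower bound exploits $\min(1,\omega_0)\leq\omega_0$ while the upper bound exploits $\min(1,\omega_0)\leq 1$ — so I would be explicit about which of the two comparisons is invoked at each step, and make the coefficient domination $|a_j|\leq n_j$, $|b_j|\leq m_j$ visible so that the chain of inequalities closes cleanly.
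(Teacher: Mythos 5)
Your argument is correct and complete, but note that the paper itself does not actually prove this lemma: its ``proof'' is a single sentence deferring to \cite{charline} (and it even miscites the label, pointing to Lemma~2 rather than Lemma~1). Your self-contained elementary derivation is therefore genuinely different from, and more informative than, what the paper offers. Two points in your write-up are worth emphasizing. First, you are right that the statement as printed is incomplete: the hypothesis constrains $n_j$ and $m_j$ while the conclusion involves $a_j$ and $b_j$, and no relation between them is stated; the domination $|a_j|\leq n_j$, $|b_j|\leq m_j$ that you supply is exactly the missing link, and without it the hypothesis has no bearing on $\omega_t$. Second, your chain of inequalities in fact yields the sharper bound $\omega_t\leq\omega_0+\min(1,\omega_0)\leq\omega_0+1$, and your separate treatment of the sine and cosine sums is the right way to recover the looser constant $c_0=1+\omega_0+1$ that the lemma states; likewise the lower bound hinges precisely on $\min(1,\omega_0)\leq\omega_0$, which you isolate correctly. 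In short, your proof fills a gap the paper leaves open rather than merely reproducing its argument.
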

\begin{proof}
For the proof of Lemma \ref{lem2}, one can refer to Charline et al. in \cite{charline}.
\end{proof}
\begin{Lem}
\label{lem2}
Let $\left(  V, \vert\vert.\vert\vert_\infty \right)$ be a normed space such that, 
$
V=\left\lbrace (y_t)_{t\in \mathbf{Z}} | \ \sup_{t\in \mathbf{Z}
}\mathbf{E}\vert y_t \vert \leq \infty \right\rbrace
$ 
and $L$ be a linear operator on $V$ defined by:
\begin{eqnarray*}
L&:& V\rightarrow V\\ 
&\;& y\rightarrow Ly=\left( Ly_t \right)_{t\in \mathbf{Z}}=\left( y_{t-1}\right)_{t\in \mathbf{Z}} 
\end{eqnarray*}
\noindent with $\vert\vert L \vert\vert_{\infty}=\sup_{t\in{\mathbf{Z}}} \mathbf{E}\vert y_t\vert$, then the delayed operator
satisfies
$$\vert\vert L^i\vert\vert_{\infty}=1\ \ \forall i\in \mathbf{N}.$$
\end{Lem}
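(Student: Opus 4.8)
The plan is to show that each power $L^i$ is an isometry of $V$ and then read off the value of the induced operator norm. First I would pin down the notation: the norm on the sequence space $V$ is $\|y\|_\infty=\sup_{t\in\mathbf{Z}}\mathbf{E}\vert y_t\vert$, while $\|L^i\|_\infty$ is meant to be the \emph{operator} norm $\sup_{y\neq 0}\|L^i y\|_\infty/\|y\|_\infty$ induced by it. Iterating the defining relation $Ly=(y_{t-1})_{t\in\mathbf{Z}}$ gives $(L^i y)_t=y_{t-i}$ for every $i\in\mathbf{N}$, so $L^i$ is again a well-defined map $V\to V$: the shifted sequence takes exactly the same family of values $\mathbf{E}\vert y_s\vert$ as $y$, merely reindexed over $\mathbf{Z}$, hence has the same finite supremum.

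Next I would compute $\|L^i y\|_\infty$ directly. By definition $\|L^i y\|_\infty=\sup_{t\in\mathbf{Z}}\mathbf{E}\vert y_{t-i}\vert$, and the substitution $s=t-i$ is a bijection of $\mathbf{Z}$ onto itself, so the supremum is unchanged: $\|L^i y\|_\infty=\sup_{s\in\mathbf{Z}}\mathbf{E}\vert y_s\vert=\|y\|_\infty$. Thus $L^i$ preserves the norm exactly, that is, it is an isometry on $V$.

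Finally, from the isometry I would conclude the operator-norm value. For every $y\neq 0$ the ratio $\|L^i y\|_\infty/\|y\|_\infty$ equals $1$, so the supremum of these ratios is $1$, giving $\|L^i\|_\infty=1$. Here one only needs $V$ to contain a single nonzero element so that the supremum is genuinely attained at the value $1$; a nonzero constant sequence serves this purpose.

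There is no genuinely hard step: the entire content of the lemma is the translation-invariance of a supremum taken over all of $\mathbf{Z}$. The only points requiring mild care are keeping the sequence-space norm distinct from the induced operator norm, and observing that the index shift $t\mapsto t-i$ is a \emph{bijection} of $\mathbf{Z}$ — it is precisely this bijectivity that upgrades the backshift from a mere contraction to a norm-preserving isometry, which is what forces the value $1$ rather than something $\leq 1$.
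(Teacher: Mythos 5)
Your proof is correct, and it is in fact more than the paper provides: the paper does not prove Lemma \ref{lem2} at all, but simply defers to Uwilingiyimana, Diongue and Ogouyandjou \cite{charline}. Your argument is the natural self-contained one --- iterate the backshift to get $(L^i y)_t = y_{t-i}$, observe that $t \mapsto t-i$ is a bijection of $\mathbf{Z}$ so the supremum $\sup_{t}\mathbf{E}\vert y_{t-i}\vert$ coincides with $\sup_{s}\mathbf{E}\vert y_s\vert$, conclude that $L^i$ is an isometry, and hence that the induced operator norm equals $1$ (using that $V$ contains a nonzero element so the supremum of ratios is attained). You also handle, correctly and explicitly, a genuine sloppiness in the statement itself: as written, $\vert\vert L\vert\vert_\infty = \sup_{t}\mathbf{E}\vert y_t\vert$ is not well-formed (it depends on an unquantified $y$), and your reading --- that this expression is the norm on $V$ and that $\vert\vert L^i\vert\vert_\infty$ is the induced operator norm --- is the only sensible one and is the one the paper's subsequent use of the lemma requires. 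Your closing remark that the bijectivity of the index shift on $\mathbf{Z}$ is what upgrades the bound from $\leq 1$ to $=1$ is exactly the right thing to emphasize; it is the whole content of the lemma. The only cosmetic point: the condition $\sup_t \mathbf{E}\vert y_t\vert \leq \infty$ in the definition of $V$ is vacuous as written and is surely meant to read $< \infty$, which your argument implicitly assumes when asserting $L^i$ maps $V$ to $V$.
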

\begin{proof}. For the proof of Lemma \ref{lem2}, one can refer to Charline et al. in \cite{charline}.
\end{proof}
\begin{Theo}
\label{tho1} 
The conditional variance $\log h_t$ of A-Realized HYGARCH model satisfies this follow inequality\\\\
$\mathbf{E}(\log h_t)\leq \vert \delta\beta \vert+f_0+\delta\phi \vert \beta-\gamma+\pi_1 \vert\mathbf{E}(\log h_{t-1})+\delta\phi \sum_{j=0}^{\infty}\vert \pi_{j+2}-\gamma\pi_{j+1} \vert \mathbf{E} (\log h_{t-2})$\\\\
where $f_0=c_0(1-\vert \beta\vert)+\xi\delta\vert \beta-\gamma+\pi_1 \vert+\xi\delta \sum_{j=0}^{\infty}\vert \pi_{j+2}-\gamma\pi_{j+1} \vert $
\end{Theo}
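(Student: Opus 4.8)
The plan is to start from the closed form (\ref{5}) for $\log h_t$ and convert it into a scalar inequality for the expectations by inserting the measurement equation (\ref{3}) wherever a realized measure appears, then passing to moduli and invoking the two lemmas.

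First I would substitute $\log x_s=\xi+\phi\log h_s+\tau(z_s)+u_s$ for $s=t-1$ and for every lag $s=t-2-j$ occurring in the operator $\sum_{j\ge 0}(\pi_{j+2}-\gamma\pi_{j+1})L^j\log x_{t-2}$. Taking expectations, the two innovation terms vanish: since $(z_t)_t$ is centred with unit variance, $\mathbf{E}[\tau(z_s)]=\tau_1\mathbf{E}[z_s]+\tau_2(\mathbf{E}[z_s^2]-1)=0$, and $\mathbf{E}[u_s]=0$ by assumption. What survives is an affine expression in $\mathbf{E}[\log h_{1,t-1}]$, in $\mathbf{E}[\log h_{t-1}]$, in the lagged means $\mathbf{E}[\log h_{t-2-j}]$, in the deterministic intercepts $\omega_t,\omega_{t-1}$, and in the constant $\xi$.

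Next I would pass from this equality to the stated inequality by using $\mathbf{E}(\log h_t)\le\mathbf{E}\vert\log h_t\vert$ together with the triangle inequality, which replaces every coefficient by its modulus ($\vert\delta\beta\vert$, $\delta\phi\vert\beta-\gamma+\pi_1\vert$, $\delta\phi\vert\pi_{j+2}-\gamma\pi_{j+1}\vert$, and so on). The deterministic part is then controlled by Lemma \ref{lem1}: since $0\le\omega_t\le c_0$, the combination $\omega_t-\beta\omega_{t-1}$ together with the $\xi$-contributions collapses into the constant $f_0$. The lag-operator terms are handled with Lemma \ref{lem2}: using $\Vert L^i\Vert_\infty=1$, the term $\delta\beta\log h_{1,t-1}$ is absorbed into the constant $\vert\delta\beta\vert$, while the infinite-order operator acting on $\log x_{t-2}$ is dominated so that the entire sum is controlled by $\mathbf{E}(\log h_{t-2})$ with coefficient $\delta\phi\sum_{j\ge 0}\vert\pi_{j+2}-\gamma\pi_{j+1}\vert$. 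Collecting the purely constant pieces into $f_0$ and $\vert\delta\beta\vert$, and the two surviving mean terms with their coefficients, should reproduce exactly the claimed bound.

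The main obstacle I expect is the rigorous treatment of the infinite sum $\sum_{j\ge 0}(\pi_{j+2}-\gamma\pi_{j+1})L^j\log x_{t-2}$. One must verify that $\sum_{j\ge 0}\vert\pi_{j+2}-\gamma\pi_{j+1}\vert$ converges — which is precisely where the admissible range of the long-memory parameter $d$ enters through the coefficients $\pi_j$ of $(1-L)^d$ in (\ref{eq2}) — and then justify both the interchange of expectation with the infinite summation and the operator-norm step of Lemma \ref{lem2} that reduces all the lagged means $\mathbf{E}[\log h_{t-2-j}]$ to the single term $\mathbf{E}(\log h_{t-2})$. A secondary point requiring care is the sign bookkeeping in $\omega_t-\beta\omega_{t-1}$, so that Lemma \ref{lem1} yields precisely the $c_0(1-\vert\beta\vert)$ contribution appearing in $f_0$.
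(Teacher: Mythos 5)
Your proposal follows essentially the same route as the paper's proof: take expectations in the representation (\ref{5}), use $\mathbf{E}(\log x_t)=\xi+\phi\,\mathbf{E}(\log h_t)$ (the innovations $\tau(z_t)$ and $u_t$ being centred), bound the time-varying intercept via Lemma \ref{lem1} and the lag operators via Lemma \ref{lem2}, and then collect the moduli of the coefficients. The points you flag as delicate (convergence of $\sum_{j}\vert\pi_{j+2}-\gamma\pi_{j+1}\vert$, the reduction of all lagged means to $\mathbf{E}(\log h_{t-2})$, and the sign bookkeeping behind the $c_0(1-\vert\beta\vert)$ term) are likewise left implicit in the paper, so there is no substantive divergence between the two arguments.
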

\begin{proof}
Using the equation $(\ref{5})$, the expectation of $\log h_t$ is given by:
\begin{eqnarray}
\label{6}
\mathbf{E}(\log h_t)=\delta\beta \mathbf{E}(\log h_{1,t-1})-\beta\mathbf{E}(\omega_{t-1})+\mathbf{E}(\omega_t)+\delta(\beta-\gamma+\pi_1)\mathbf{E}(\log x_{t-1})\nonumber\\  + \delta \sum_{j=0}^\infty (\pi_{j+2}-\gamma \pi_{j+1})L^j \mathbf{E}(\log x_{t-2}).
\end{eqnarray}
Since $\mathbf{E}(\log x_t)=\xi+\phi \mathbf{E}(\log h_t)$ then by  using Lemma $\ref{lem1}$ and Lemma $\ref{lem2}$, an upper bound of $(\ref{6})$ is calculated as follows:
$$
\left\{
    \begin{array}{l}
\displaystyle
\mathbf{E}(\omega_t) \leq c_0, \cr\cr
\delta\beta  \mathbf{E}(\log h_{1,t-1}) \leq \vert\delta\beta \vert \mathbf{E}(\log h_{1,t-1}),\cr\cr
\delta(\beta-\gamma+\pi_1)\mathbf{E}(\log x_t\omega_t)\leq  \delta\vert(\beta-\gamma+\pi_1)\vert(\xi+\phi \mathbf{E}(\log h_t)),\cr\cr
\delta \sum_{j=0}^\infty (\pi_{j+2}-\gamma \pi_{j+1})L^j\mathbf{E} (\log x_t)\leq \delta \sum_{j=0}^\infty \vert (\pi_{j+2}-\gamma \pi_{j+1})\vert (\xi+\phi \mathbf{E}(\log h_t)).\cr
\end{array}
    \right.
$$
By substituting the above results in (\ref{6}), we get
$$
\mathbf{E}(\log h_t)\leq \vert \delta\beta \vert+f_0+\delta\phi \vert \beta-\gamma+\pi_1 \vert\mathbf{E}(\log h_{t-1})+\delta\phi \sum_{j=0}^{\infty}\vert \pi_{j+2}-\gamma\pi_{j+1} \vert \mathbf{E} (\log h_{t-2}).
$$
\end{proof}
Now, consider the  A-Realized Hyperbolic GARCH process, we have
\begin{eqnarray}
\label{7}
\nonumber \mathbf{E}(\log h_t)&\leq& \vert \delta\beta \vert\mathbf{E}(\log h_{1,t-1})+f_0
+\delta\phi \vert \beta-\gamma+\pi_1 \vert\mathbf{E}(\log h_{t-1})\\
&\;& \quad\quad\quad\quad\quad\quad\quad\quad\quad+\delta\phi \sum_{j=0}^{\infty}\vert \pi_{j+2}-\gamma\pi_{j+1} \vert  \mathbf{E} (\log h_{t-2}),
\end{eqnarray}
and
\begin{eqnarray}
\label{8}
\nonumber \mathbf{E}(\log h_{1,t})&\leq& \vert \beta \vert\mathbf{E}(\log h_{1,t-1})+f_1+\phi \vert \beta-\gamma+\pi_1 \vert\mathbf{E}(\log h_{t-1})\\
&\;&\quad\quad\quad\quad\quad\quad +\phi \sum_{j=0}^{\infty}\vert \pi_{j+2}-\gamma\pi_{j+1} \vert  \mathbf{E} (\log h_{t-2}),
\end{eqnarray}
where $f_1=\frac{f_0}{\delta}$.  Note that inequalities $(\ref{7})$ and $(\ref{8})$ can be rewritten in matrix form as 
\begin{equation}
\label{9}
H_t \leq M+BH_{t-1},
\end{equation}
with some initial condition $H_{-1}$. Iterating inequality (\ref{9}), we get\\
 \begin{equation}
 \label{10}
H_t \leq M \sum_{i=0}^{t-1}B^i+B^tH_0=D_t.
\end{equation}
The matrices $H_t$, $M$ and $B$ are defined as follows:\\
$$H_t=\left[\begin{array}{c}
\mathbf{E}(\log h_t)\\
\mathbf{E}(\log h_{1,t})\\
\mathbf{E}( \log h_{t-1})
\end{array}\right]; \;
{M}=\left[\begin{array}{c}
f_0\\
f_1 \\
0
\end{array}\right]
 $$
and
$$ 
B=\left[\begin{array}{ccc}
\phi\delta\vert \delta(\beta-\gamma+\pi_1)\vert&\vert\beta \delta \vert &\phi\delta   \sum_{j=0}^\infty \vert(\pi_{j+2}-\gamma \pi_{j+1}))\vert\\\\
\phi\vert (\beta-\gamma+\pi_1)\vert&\vert\beta \vert &\phi \sum_{j=0}^\infty \vert (\pi_{j+2}-\gamma \pi_{j+1}))\vert\\\\
1    &      0        &    0
\end{array}\right].$$
\begin{Lem}
\label{lem3}
Let $\delta$, $\phi$, $\beta$ and $\gamma$ be the parameters of the A-Realized HYGARCH model. 
If
\begin{eqnarray*}
\left\lbrace\begin{array}{ccc}
\delta\phi\vert \beta-\gamma+\pi_1\vert+\vert\beta\vert +\phi\delta\sum_{j=0}^{\infty}\vert \pi_{j+2}-\gamma\pi_{j+1} \vert-1
&\leq& 0\cr\cr
\delta\phi\vert \beta-\gamma+\pi_1\vert+\vert\beta\vert & \leq & 2 ,
\end{array}\right.
\end{eqnarray*}
 then the spectral radius of $B$, $\rho(B)< 1$.
\end{Lem}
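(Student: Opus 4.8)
The plan is to compute the eigenvalues of $B$ explicitly; the bottom row $(1,0,0)$ together with the fact that the first row of $B$ is exactly $\delta$ times the second row (compare the coefficients in $(\ref{7})$ and $(\ref{8})$) makes this completely tractable. To lighten the notation I would write
$$s=\delta\phi\vert\beta-\gamma+\pi_1\vert+\vert\beta\vert,\qquad c=\delta\phi\sum_{j=0}^{\infty}\vert\pi_{j+2}-\gamma\pi_{j+1}\vert,$$
so that $s,c\geq 0$ (recall $\delta,\phi>0$) and the two hypotheses read simply $s+c\leq 1$ and $s\leq 2$. Since the first two rows of $B$ are proportional, $B$ is singular, so $0$ is automatically an eigenvalue and I expect the characteristic polynomial to carry a factor $\lambda$.

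First I would expand $\det(B-\lambda I)$ along the third row, which has a single off-diagonal nonzero entry. The two surviving $2\times 2$ minors simplify (the proportionality of the first two rows cancels the terms that are constant in $\lambda$), and the computation yields
$$\det(B-\lambda I)=-\lambda^{3}+s\lambda^{2}+c\lambda=-\lambda\left(\lambda^{2}-s\lambda-c\right).$$
Hence the spectrum of $B$ is $\{0\}$ together with the two roots of $\lambda^{2}-s\lambda-c=0$, namely
$$\lambda_{\pm}=\frac{s\pm\sqrt{s^{2}+4c}}{2}.$$
Because $s,c\geq 0$ the discriminant is nonnegative, so both roots are real; their product equals $-c\leq 0$, so $\lambda_{-}\leq 0\leq\lambda_{+}$.

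Next I would identify the spectral radius. Since $\lambda_{+}-\vert\lambda_{-}\vert=s\geq 0$, the root of largest modulus is $\lambda_{+}$, so $\rho(B)=\lambda_{+}$ and it remains only to prove $\lambda_{+}<1$. This is equivalent to $\sqrt{s^{2}+4c}<2-s$. At this point the hypothesis $s\leq 2$ is used to guarantee that the right-hand side is nonnegative, which legitimises squaring; squaring then gives $s^{2}+4c<(2-s)^{2}$, that is $c<1-s$, that is $s+c<1$, which is exactly the first hypothesis. This chain of equivalences delivers $\rho(B)=\lambda_{+}<1$.

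The difficulty here is entirely in the bookkeeping rather than in any deep idea: one must track signs carefully so as to be sure that $\rho(B)=\lambda_{+}$ and not $\vert\lambda_{-}\vert$, and one must respect the nonnegativity $2-s\geq 0$ before squaring. I would emphasise that the two hypotheses play genuinely distinct roles — the condition $s\leq 2$ only secures the validity of the squaring step, while $s+c\leq 1$ supplies the actual bound — and that to obtain the strict conclusion $\rho(B)<1$ one should read at least the first inequality as strict.
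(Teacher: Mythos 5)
Your proposal follows essentially the same route as the paper: compute the characteristic polynomial $-\lambda\left(\lambda^{2}-s\lambda-c\right)$ using the proportionality of the first two rows, identify $\rho(B)=\tfrac{1}{2}\left(s+\sqrt{s^{2}+4c}\right)$, and reduce $\rho(B)<1$ to the two stated conditions by squaring. You are in fact slightly more careful than the paper on two points: you justify that $\lambda_{+}$ dominates $\vert\lambda_{-}\vert$ in modulus (the paper only compares the eigenvalues themselves, not their moduli), and you correctly note that the first hypothesis must be read strictly ($s+c<1$) to get the strict conclusion $\rho(B)<1$, since $s+c=1$ with $s\leq 2$ yields $\rho(B)=1$.
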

\begin{proof}
Let show that the spectrum $\Lambda(B)$ is not a empty set and its maximum eigenvalue is strictly less than one.\\
\begin{displaymath}
{B}=\left[\begin{array}{ccc}
\phi\delta\vert \delta(\beta-\gamma+\pi_1)\vert& \quad\quad \vert\beta \delta \vert &\quad\quad\phi\delta   \sum_{j=0}^\infty \vert(\pi_{j+2}-\gamma \pi_{j+1}))\vert\\\\
\phi\vert (\beta-\gamma+\pi_1)\vert&\quad\quad\vert\beta \vert &\quad\quad\phi \sum_{j=0}^\infty \vert (\pi_{j+2}-\gamma \pi_{j+1}))\vert\\\\
1    & \quad\quad     0        & \quad\quad   0
\end{array}\right]
\end{displaymath}
For sake of simplicity, let us rewrite the matrix $B$ as 
\begin{displaymath}{B}=\left[\begin{array}{ccc}
a&b &c\\\\
\frac{a}{\delta}&\frac{b}{\delta} &\frac{c}{\delta}\\\\
1     &      0        &    0
\end{array}\right]\end{displaymath}

\noindent The characteristic polynomial of $B$ is given by
$$
P_B(\lambda)=\lambda(-\lambda^2+(a+\frac{b}{\delta})\lambda+c).
$$
By solving the equation $P_B(\lambda)=0$, the eigen values of the  matrix $B$ are 
\begin{eqnarray*}
\left\lbrace\begin{array}{ccc}
\lambda_1&=&0\\
\lambda_2&=&\frac{1}{2}\left((a+\frac{b}{\delta})-\sqrt{(a+\frac{b}{\delta})^2+4c}\right)\\
\lambda_3&=&\frac{1}{2}\left((a+\frac{b}{\delta})+\sqrt{(a+\frac{b}{\delta})^2+4c}\right)
\end{array}\right.
\end{eqnarray*}
Since $max\{\lambda_1,\lambda_2, \lambda_3\}=\lambda_3$ that is, 
$$ 
\rho (B)= \frac{1}{2}\left((a+\frac{b}{\delta})+\sqrt{(a+\frac{b}{\delta})^2+4c}\right).
$$
The spectral radius of $B$ is less than one if and only if the following condition is satisfied 
\begin{equation}
\label{11}
\left\lbrace\begin{array}{cc}

a+\frac{b}{\delta}+c-1& \leq 0 \\
 a+\frac{b}{\delta} & \leq  2 

\end{array}\right.
\end{equation}
We just have to replace $a$, $b$ and $ c$ by thier expressions in $(\ref{11})$, where $a=\phi\delta\vert (\beta-\gamma+\pi_1)$,  $b=\vert\beta \delta \vert$ and $c=\phi\delta   \sum_{j=0}^\infty \vert(\pi_{j+2}-\gamma \pi_{j+1}))\vert$
Thus,   $(\ref{11})$ is rewritten as follows:
$$
\left\lbrace\begin{array}{cc}

\delta\phi\vert \beta-\gamma+\pi_1\vert+\vert\beta\vert +\phi\delta\sum_{j=0}^{\infty}\vert \pi_{j+2}-\gamma\pi_{j+1} \vert-1&\leq 0\\\\
\delta\phi\vert \beta-\gamma+\pi_1\vert+\vert\beta\vert & \leq  2 

\end{array}\right.
$$
\end{proof}
\begin{Theo}
\label{tho2}
Let $\delta$, $\phi$, $\beta$ and $\gamma$ be the parameters of the A-Realized HYGARCH model. If
$$
\left\lbrace\begin{array}{cc}

\delta\phi\vert \beta-\gamma+\pi_1\vert+\vert\beta\vert +\phi\delta\sum_{j=0}^{\infty}\vert \pi_{j+2}-\gamma\pi_{j+1} \vert-1&\leq 0\\\\
\delta\phi\vert \beta-\gamma+\pi_1\vert+\vert\beta\vert & \leq  2 ,

\end{array}\right.$$
then the process $\{r_t\}$ followings an A-Realized Hyperbolic GARCH model defined in relations $(\ref{1})$,$(\ref{2})$ and $(\ref{3})$ is asymptotically stable with finite variance.
\end{Theo}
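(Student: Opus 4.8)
The plan is to read off the conclusion directly from the iterated matrix inequality (\ref{10}) together with the spectral bound furnished by Lemma \ref{lem3}. First I would stress that every entry of $B$ is a nonnegative number: each is either $0$, $1$, an absolute value, or a product of $\phi$, $\delta$ with such absolute values. This nonnegativity is exactly what legitimizes the iteration of the recursion (\ref{9}) in the componentwise order, since $A\leq C$ entrywise together with $B\geq 0$ forces $BA\leq BC$. Substituting $H_{t-1}\leq M+BH_{t-2}$ repeatedly then yields (\ref{10}), namely $H_t\leq M\sum_{i=0}^{t-1}B^i+B^tH_0=D_t$, with the inequality understood coordinate by coordinate.

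Next I would invoke Lemma \ref{lem3}: under the two stated conditions one has $\rho(B)<1$. Two standard consequences follow. On the one hand $B^t\to 0$ as $t\to\infty$, so the transient term $B^tH_0$ vanishes irrespective of the initial condition $H_0$. On the other hand the Neumann series converges, $\sum_{i=0}^{\infty}B^i=(I-B)^{-1}$, the inverse existing precisely because $\rho(B)<1$ rules out $1$ as an eigenvalue of $B$. Passing to the limit in (\ref{10}) then gives $D_t\to M(I-B)^{-1}$, a fixed finite vector, so that $\limsup_{t\to\infty}\mathbf{E}(\log h_t)$ is bounded above by the first coordinate of $M(I-B)^{-1}$, which is finite. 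This establishes that $\mathbf{E}(\log h_t)$ does not diverge, i.e. the asymptotic boundedness that underlies the stability claim.

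Finally I would translate this back into the statement about the variance via $\mathbf{E}(r_t^2)=\mathbf{E}(h_t)$, concluding that the second moment stays bounded and hence that the process is asymptotically stable with finite variance.

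The main obstacle I anticipate lies in this last translation step rather than in the linear algebra, which is routine once $\rho(B)<1$ is in hand. The whole chain (\ref{7})--(\ref{10}) controls $\mathbf{E}(\log h_t)$, whereas the variance is governed by $\mathbf{E}(h_t)=\mathbf{E}(e^{\log h_t})$, and Jensen's inequality runs the wrong way ($\mathbf{E}(\log h_t)\leq \log\mathbf{E}(h_t)$). To close the gap rigorously one would need either an additional integrability assumption on the driving noises $z_t$ and $u_t$ (e.g. control of the relevant moment generating function), or to make explicit that \emph{finite variance} is being read as a boundedness statement for $\log h_t$. I would flag this interpretive point. A secondary, purely technical check is to confirm that the infinite sums $\sum_{j}\vert\pi_{j+2}-\gamma\pi_{j+1}\vert$ appearing in $B$ and in $M$ are finite, so that $B$ and $M$ are genuinely well-defined before any spectral reasoning begins.
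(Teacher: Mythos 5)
Your proposal follows essentially the same route as the paper: iterate the componentwise inequality (\ref{10}), invoke Lemma \ref{lem3} to obtain $\rho(B)<1$, and conclude via the Neumann series $\sum_{i=0}^{\infty}B^i=(I-B)^{-1}$ and $B^t H_0\to 0$ that $H_t$ is asymptotically bounded by $(I-B)^{-1}M$. The caveats you flag --- that the argument only controls $\mathbf{E}(\log h_t)$ whereas ``finite variance'' concerns $\mathbf{E}(h_t)=\mathbf{E}(e^{\log h_t})$ with Jensen's inequality running the wrong way, and that the entrywise nonnegativity of $B$ and the convergence of $\sum_{j}\vert\pi_{j+2}-\gamma\pi_{j+1}\vert$ are used tacitly --- are genuine issues that the paper's own proof does not address.
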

\begin{proof}
From $(\ref{9})$ and $(\ref{10})$, we recall that
 \begin{equation*}
H_t \leq M \sum_{i=0}^{t-1}B^i+B^tH_0 \ \ \  t\leq 0.
\end{equation*}
According to the convergence matrix (see, Lancaster and Tismenetsky \cite{lancaster}),  the necessary and sufficient condition for the convergence of $D_t$ when $t\rightarrow \infty$ is $\rho(B)< 1$, by Lemma $\ref{lem3}$, suppose that the spectral radius is strictly less than one. Now we show that if $(I-B)$ exists, its inverse exists and $\sum_{i=0}^{t-1} B^i=(I-B)^{-1}$ as $\displaystyle\lim_{t\rightarrow \infty} B^tH_0=0$.\\
The eigenvalues of $(I-B)$ are $(1-\lambda(B))$ where $\lambda(B)$ are the eigenvalues of matrix $B$.\\ 
The set of eigenvalues of $(I-B)$ is not empty, hence matrix $(1-\lambda(B))$ is invertible.\\
Let 
$$ 
S_n=I+B+B^2+\cdots+B^{n-1}=\sum_{i=0}^{n-1} B^i$$
$$ BS_n=B+B^2+\cdots+B^n.
$$
Hence, 
$$
(I-B)S_n=I-B^n.$$
 By using the fact that $\displaystyle \lim_{n\rightarrow \infty} B^n=0$, we can prove that $\displaystyle\lim_{n\rightarrow \infty} (I-B^n)=I$. We get $(I-B) \displaystyle \lim_{n\rightarrow \infty}S_n=I$, that is  $\displaystyle \lim_{n\rightarrow \infty}S_n=(I-B)^{-1}$. More precisely  $ \displaystyle\lim_{n\rightarrow \infty} \sum_{i=0}^{n-1} B^i=(I-B)^{-1}$, as $\displaystyle\lim_{t\rightarrow \infty} B^t=0$, under Lemma $\ref{lem3}$, we conclude that $\displaystyle\lim_{t\rightarrow \infty} H_t \leq (I-B)^{-1} M .$
\end{proof}
\section{Estimation}
\label{sec4}
In this section, we report the Monte Carlo simulation evidence on the estimation of our Adaptive Realized Hyperbolic GARCH model for \textbf{Data Generating Processes (DGP)}. For all models used in this section, we assume that $z_t$ and $u_t$ follow respectively the student $\mathcal{T}$ distribution with $3$ degrees of freedom and the normal distribution $\mathcal{N}(0,\sigma_u)$. We consider $p=q=1$ and $\omega=0.1$, $ \gamma=0.1$, $\beta=0.4 $, $d=0.25,0.35;,0.45$ $, \delta =0.9$, $\epsilon =0$, $ \phi =1$, $\tau_1=-0.08$, $\tau_2=0.06$, and $ \sigma_u^2=0.4 $. The three values of the long memory  parameter $d$ are those proposed by \cite{shiandyang}, as \textit{low memory} ($d= 0.25$),\textit{ moderate memory} ($d=0.35$) and \textit{high memory} ($d=0.45$).\\

\noindent To obtain the \textbf{DGP} samples from Realized Hyperbolic GARCH with structural change, we fellow the \textbf{Step 1}, \textbf{Step 2} and  \textbf{Step 3} below. Notice that, \textbf{step 3} acts as the core part of this simulation study, and it must be repeated for each model and each replication. \textbf{Step 1} is also repeated for each replication while \textbf{Step 2} only needs to be performed once for each model. Following Shi and Yang in \cite{shiandyang}, we consider $500$ Monte Carlo replications.

 \begin{enumerate}
 \item{\textbf{Step 1}:} Set $z_t\backsim \mathcal{T}(0,1,\nu)$ and $u_t\backsim \mathcal{N}(0,\sigma_u)$. We get an i.i.d sample $\{z_t\}_{t=m}^T$ and $\{u_t\}_{t=m}^T$, where $m$ represents the number of extra burn in the data generated.
 \item{\textbf{Step 2}:} Choose appropriate designs for the intercept term in each model. 
 In this step, we consider three different designs:
 \begin{itemize}
 \item \textit{Design 1:}  $(m_1)$ assumes a constant intercept $\omega=\omega_t=0.1$, and corresponds to the standard experiment setting where no structural breaks are allowed in the conditional variance.
\item \textit{Design 2:} $(m_2)$ adopts the permanent break structure used by Baillie and Morana \cite{BaillieandMorana2008} and has one step change in the intercept. The intercept jumps from $0.1$ to $0.5$ without bouncing back in the future. 
Hence, 
\begin{displaymath} {\omega_t}=\left\lbrace\begin{array}{ccc}
0.1,&&t=1,...\frac{T}{2}\\
0.5,&&\quad t=\frac{T}{2}+1,...,T
\end{array}\right.\end{displaymath}

\item \textit{Design 3:} $(m_3)$ has two step changes. With the intercept jumping from $0.1$ to $0.5$ at the first break point and bouncing back to $0.3$ at the second break point. Hence, 
\begin{displaymath} {\omega_t}=\left\lbrace\begin{array}{ccc}
0.1,&&t=1,...\frac{T}{3}\\
0.5,&& \quad t=\frac{T}{3}+1,...,\frac{2T}{3}\\
0.3,&&\quad t=\frac{2T}{3}+1,...,T.
\end{array}\right.
\end{displaymath}
 \end{itemize}
\item{\textbf{Step 3}:} The sample $\{r_t\}_{t=1}^T$ and $\{x_t\}_{t=1}^T$ are obtained by using the specification Realized ARCH(3000).  
 \end{enumerate}

\noindent The log-likelihood function is applied on the models used in  this paper can be described as follows:
\begin{eqnarray}
     \label{eq15}
l\left(r,x;\theta_t\right)&=&\underbrace{-\sum_{t=1}^n \nonumber \left[A\left(\nu\right)+\log\left(\pi\left(\nu-2\right)\right)+0.5\log\left(h_t\right)+\dfrac{\nu+1}{2}\log\left(1+\frac{r_t^2}{h_t\left(\nu-2\right)}\right)\right]}_{l\left(r\mid x;\theta_t\right)}\\
&&\quad\quad\quad\quad\quad\quad\quad\quad\underbrace{-\frac{1}{2}\sum_{t=1}^n\left[\log  2\pi+\log\left(\theta_u^2\right)+\frac{u_t^2}{\theta_u^2}\right]}_{=l\left(x;\theta_t\right)},
\end{eqnarray}
 where 
 $$
 u_t=\log x_t-\epsilon-\phi \log h_t-\tau_1z_t-\tau_2(z_t^2-1)$$
  and  
  $$
  A(\nu)=\log(\Gamma(\frac{\nu}{2}))-\log(\Gamma(\frac{\nu+1}{2})).$$
 The parameter vector to be estimated is $\theta=(\omega',\gamma,\beta,d,\delta,\nu,\epsilon,\phi,\tau_1,\tau_2,\sigma_u^2,a',b' )$.\\
We maximize equation (\ref{eq15}) with the help to statistical packages in R software to estimate the vector $\theta$. \\
\begin{table}
\caption{Simulation results for estimation of A-Realized HYGARCH model without structural change.}
\label{tab1}
\begin{center}
\begin{tabular}{ ccccc }
\hline
 & A-Realized HYGARCH (1,d,1,0) &  &  &  \\ 
\hline
 d & Bias & RMSE & SE \\
\hline
 0.25 &0.0539&0.0932  & 0.0761&   \\
\hline
 0.35 &   0.0409  &   0.0868       &  0.0765  &  \\
\hline
  0.45 &  0.0363 & 0.0760  & 0.0667   &     \\
\hline
\end{tabular}\\
\end{center}
\end{table}\\
\textbf{Note 1}: The Table\ref{tab1} reports simulation results for the bias, root mean square error (RMSE) and standard error (SE) for estimation of the fractional differencing parameter d from simulations with sample size
T = 3000. All the results are based on 500 replications.\\

\begin{table}
\caption{Simulation results for estimation of A-Realized HYGARCH models without structural change.}
\label{tab2}
\begin{center}
\begin{tabular}{ ccccc }
\hline
 & A-Realized HYGARCH (1,d,1,k) &  &    \\ 
\hline
 d & Bias & RMSE & SE & \\
\hline
    &  k=1 &   &  &\\
\hline 
 0.25 & 0.0816   & 0.1184& 0.0858  & \\
  0.35 & 0.0239 &0.0795 & 0.0758& \\
0.45 & 0.0082 &  0.0793 & 0.0789& \\
\hline
   &  k=2 &   & &  \\
\hline   
  0.25 & 0.0563  & 0.0989 & 0.0813 & \\
 0.35 & 0.0090 & 0.0811 & 0.0806  & \\
  0.45 &  0.0059   & 0.0904 & 0.0903 & \\
\hline
 &  k=3 &   & &  \\
 \hline
  0.25 & 0.0302   &  0.0877 & 0.0823  &\\
  0.35 & 0.0028   & 0.0914  &  0.0914  &\\
  0.45 & -0.0276   &  0.1029  & 0.0992   & \\
\hline
 &  k=4 &   &   & \\
 \hline
  0.25 &  0.0390 & 0.0787 & 0.0684 & \\
  0.35  & -0.0132 & 0.0769  &  0.0758 & \\
  0.45 & -0.0411  &  0.1026  & 0.0940  &  \\
\hline           
\end{tabular}\\
\end{center}
\end{table}

\noindent \textbf{Note 2}: as for Table \ref{tab1}, the Table \ref{tab2} compute the estimated models used in Gallant \cite{Gallant1984}
's $k^{th}$ order flexible functional form, with $k = 1, 2, 3, 4$,  for the adaptive component.

\begin{table}
\caption{Simulation results for estimation of A-Realized HYGARCH$\left(1,d,1,k\right)$ models with various structural
change designs}
%\tiny
\label{tab3}
\begin{center}
\begin{tabular} {p{1.1cm}p{1.1cm}p{1.1cm}p{1.1cm}p{1.1cm}p{1.1cm}p{1.1cm}p{1.1cm}p{1.1cm}p{1.1cm}p{1.1cm}p{1.1cm}}
& & A-Realized HYGARCH(1,0.25,1,k) & & & A-Realized HYGARCH(1,0.35,1,k)& & &A-Realized HYGARCH(1,0.45,1,k) &  & &   \\
\hline 
 & & BIAS&RMSE & SE &BIAS & RMSE &SE& BIAS & RMSE & SE &\\
\hline
 k=0 &m2 &   0.0886 & 0.0959   & 0.0366   &   0.0105 & 0.0336 &  0.0319 & -0.058 &  0.0684 &  0.0362 &\\
  &m3 & 0.2353 & 0.2466  &  0.0737 &   0.1474    & 0.1590 & 0.0596 &  0.1127  &  0.1291   &  0.0630    &\\
\hline
 k=1 &m2 &0.1134  & 0.1203   &  0.0402&  0.0320 &  0.0487 & 0.0367  & 0.0329  &0.0452& 0.0419 &\\
  &m3 &0.1245  & 0.1366   &  0.0562 &  0.0753   & 0.0890  &  0.0474  &  0.0340   &  0.0605  & 0.0500  &\\   
 \hline
k=2 &m2 & 0.1244 & 0.1358   &  0.0544& 0.0307    & 0.0568  &  0.0478   & -0.046  & 0.0643 & 0.0443   &\\
  &m3 & 0.0895 & 0.1047    &    0.0542 &  0.0494  & 0.0719   & 0.0522   & 0.0108    & 0.0539  &  0.0528  &\\
\hline 
k=3 &m2 & 0.1481 &  0.1613   & 0.0640& 0.0666  & 0.0954    &   0.0682 &  -0.03   & 0.0647 &0.0569    &\\
  &m3 & 0.0566 & 0.0801 & 0.0567 &  0.0190  & 0.0671    & 0.0644     &  0.0011   & 0.0750    & 0.0750  &\\
\hline 
k=4&m2 &0.1416 &  0.1559  & 0.0653  &  0.0703& 0.1053  & 0.0784   &  -0.035 &  0.0599    & 0.0023 &\\
  &m3 & 0.0529&  0.0835   &   0.0646&  -0.004    &  0.0731 &   0.0730& -0.019& 0.0882 &  0.0860& \\   
\hline   

\end{tabular}
\end{center}
\end{table}
\noindent \textbf{Note 3}: The Table \ref{tab3} reports simulation results for the bias, root mean square error (rmse) and the standard error (se) for estimation of the fractional differencing parameter d from a sample size of $T = 3000$ observations. All the results are based on 500 replications. The simulations are for two different experiments of: a single break point $(m_2)$ and two break points $(m_3)$.\\

\noindent Table \ref{tab1} summarizes estimations results of the A-Realized HYGARCH models with $k=0$ equivalent to the ordinary A-Realized HYGARCH models for the Realized HYGARCH DGP with \textit{Design 1}. We remark that the estimated long memory parameter $d$ has a very small bias. This result is consistent with the three values assumed for $d$. The Realized HYGARCH$\left(1,d,1\right)$ DGP with $d=0.45$ has the lowest estimation bias.\\
Table \ref{tab2} summarizes estimation results for an A-Realized HYGARCH models with $k=(1,2,3,4)$. There is an important result obtained by comparing Table \ref{tab1} and Table \ref{tab2}. More than half of the model shows a reduction in bias after adopting an adaptive structure. As $d$ increases, the reduction in the degree of bias tends to increase. The estimated long memory parameter, obtained from the A-Realized HYGARCH and the Realized HYGARCH model estimation, has, in both, approximately the same degree of small sample RMSE. This result suggests that the intercept used (which follows a flexible function form with more than one pair of trigonometric components) can adjust for some uncertainties in the estimation of the long memory parameter $d$ \cite{BaillieandMorana2008}.\\
Table \ref{tab3} reports estimation results for estimates of A-Realized HYGARCH$\left(1,d,1,k \right)  $ models. From table \ref{tab3}, it can be seen that most A-Realized HYGARCH$\left(1,d,1,k \right)  $ models appear to have smaller estimation bias for the $m_3$ structural change design than the $m_2$ design. For the two cases, from the high persistence case $(d=0.45)$, the degree of bias in the estimates of $d$ is very small for both estimators. However, the bias is always smaller using the A-Realized HYGARCH model than the pure Realized HYGARCH model. Furthermore, the RMSE of the estimated of $d $ is generally lower from the A-Realized HYGARCH estimation compared to the pure Realized HYGARCH one. Finally, we can say that the A-Realized HYGARCH model is, generally, more perform than the standard Realized HYGARCH model. Indeed, the former is robust across the three values used in the designs contrary to the latter.  Furthermore, the improvement increases as the degree of persistence increases. \\

\noindent In general, the A-Realized HYGARCH model consistently outperforms the Realized HYGARCH model across different simulation designs with and without structural change. This fact suggests the usefulness of the A-Realized HYGARCH model in practice. The findings of this research are consistent with those from Baillie and Morana \cite{BaillieandMorana2008} 
\section{Conclusion and futur works}
In this article, we have developed the adaptive Realized HYGARCH process. It is much more flexible in modeling long-memory behavior and structural change often encountered in financial data. Under some assumptions, the model is shown to be stable. The quasi-maximum likelihood procedure is used to estimate the parameter of this model. Finite sample behaviors of this method were studied using Monte Carlo simulations. It indicates the A-Realized HYGARCH model outperforms the Realized HYGARCH model with and without structural change.\\
Since the results and the estimation methodology are encouraging, it will be interesting to examine the Adaptive Realized HYGARCH model's empirical application in financial data.\\


\begin{thebibliography}{99}
\bibitem{aggarwaletal1999}
\textit{Aggarwal, R., Inclan, C., Leal, R., 1999.} Volatility in emerging stock markets. J. Financ.
Quant. Anal. 34, 33-55.
\bibitem{Andersenetal2003}
\textit{Andersen, T. G., Bollerslev, T., Diebold, F. X.,  Labys, P. (2003).} Modeling and forecasting realized volatility. Econometrica, 71(2), 579-625.
\bibitem{AndersonBollerslev1998}
\textit{Andersen, T. G.,  Bollerslev, T. (1998).} Answering the skeptics: Yes, standard volatility models do provide accurate forecasts. International economic review, 885-905.
\bibitem{BaillieandMorana2008}\textit{Baillie, Richard T., and Claudio Morana. 2009}. Modelling long memory and structural breaks in conditional
variances: An adaptive figarch approach. Journal of Economic Dynamics and Control 33: 1577-92.
\bibitem{BollerslevandBaillie1996}\textit{Richard T Baillie, Tim Bollerslev, and Hans Ole Mikkelsen.} Fractionally integrated generalized
autoregressive conditional heteroskedasticity. Journal of econometrics, 74(1):3\{30, 1996.
\bibitem{BarndorNielsenetal2008}\textit{Barndorff-Nielsen, O. E., Hansen, P. R., Lunde, A.,  Shephard, N. (2008).} Designing realized kernels to measure the ex post variation of equity prices in the presence of noise. Econometrica, 76(6), 1481-1536.
\bibitem{BarndorNielsen2002}
\textit{Barndorff-Nielsen, O. E.,  Shephard, N. (2002).} Econometric analysis of realized volatility and its use in estimating stochastic volatility models. Journal of the Royal Statistical Society: Series B (Statistical Methodology), 64(2), 253-280.
\bibitem{BeltrattiandMorana2006}\textit{Beltratti, A., Morana, C., 2006.} Breaks and persistency: macroeconomic causes of stock
market volatility. J. Econ. 131, 151-177.
\bibitem{bollerslev1986generalize}\textit{Bollerslev, T. (1986).} Generalized autoregressive conditional heteroskedasticity. Journal of econometrics, 31(3), 307-327.
\bibitem{Dacorognaetal1993}\textit{Dacorogna, Michael M., Ulrich A. M?ller, Robert J. Nagler, Richard B. Olsen, and Olivier V. Pictet. 1993.} A
geographical model for the daily and weekly seasonal volatility in the foreign exchange market. Journal of
International Money and Finance 12: 413-38.
\bibitem{Davidson(2004)}\textit{James Davidson}. Moment and memory properties of linear conditional heteroscedasticity models,
and a new model. Journal of Business \& Economic Statistics, 22(1):16\{29, 2004.
\bibitem{DieboldandInoue2001}\textit{Diebold, Francis X., and Atsushi Inoue. 2001.} Long memory and regime switching. Journal of Econometrics 105:
131-59.
\bibitem{Dingetal1993}\textit{Ding, Zhuanxin, Clive W.J. Granger, and Robert F. Engle. 1993.} A long memory property of stock market returns
and a new model. Journal of Empirical Finance 1: 83-106.
\bibitem{diongue2007stationary}\textit{Diongue, A. K., Guegan, D. (2007).} The stationary seasonal hyperbolic asymmetric power ARCH model. Statistics \& probability letters, 77(11), 1158-1164.
\bibitem{HarryVanderElst2015}
\textit{Vander Elst, H. (2015)}. FIR-GARCH: Realizing Long Memory and Asymmetries in Returns Volatility. Available at SSRN 2559052.
\bibitem{EngleandRangel2008}\textit{Engle, R.F., Rangel, J.G., 2008}. The spline-GARCH model for low-frequency volatility and its
global macroeconomic causes. Rev. Financ. Stud. 21, 1187-1222.
\bibitem{Engle(2002)}\textit{Engle, R. (2002)}. New frontiers for ARCH models. Journal of Applied Econometrics, 17(5), 425-446.
\bibitem{engle1982}
\textit{Robert F Engle.} Autoregressive conditional heteroscedasticity with estimates of the variance of
united kingdom in
ation. Econometrica: Journal of the Econometric Society, pages 987\{1007,
1982.
\bibitem{Fanetal2008}\textit{Fan, Y., Zhang, Y.J., Tsai, H.T., Wei, Y.M., 2008.} Estimating value at risk of crude oil price
and its spillover effect using the GED-GARCH approach. Energy Econ. 30, 3156-3171.
\bibitem{Gallant1984}\textit{Gallant , A.R.,1984}.The Fourier flexible form.American Journal of Agricultural Economics 66,204-208.
\bibitem{Grangerandhyang2004}\textit{Granger, C.W., Hyung, N., 2004.} Occasional structural breaks and long memory with an
application to the s\&p 500 absolute stock returns. J. Empir. Finan. 11, 399-421.
\bibitem{Grangerandding1996b}\textit{Granger, Clive W.J., and Zhuanxin Ding. 1996b.} Varieties of long memory models. Journal of Econometrics 73:
61-77.
\bibitem{hansenetal2016}\textit{Peter Reinhard Hansen and Zhuo Huang}. Exponential garch modeling with realized measures of
volatility. Journal of Business \& Economic Statistics, 34(2):269\{287, 2016.
\bibitem{hansenetal2012}\textit{Peter Reinhard Hansen, Zhuo Huang, and Howard Howan Shek}. Realized garch: a joint model
for returns and realized measures of volatility. Journal of Applied Econometrics, 27(6):877\{906,
2012.
\bibitem{Lamoureuxandlastrapes1990}\textit{Lamoureux, Christopher G., andWilliam D. Lastrapes. 1990}. Persistence in variance, structural change, and the
garch model. Journal of Business \& Economic Statistics 8: 225-34.
\bibitem{lancaster}
Lancaster, Peter. "M. Tismenetsky The theory of matrices." Computer science and applied mathematics, Academic Press, (1985).
\bibitem{Lietal2014}\textit{Li, M., Li, W. K., \& Li, G. (2015).} A new hyperbolic GARCH model. Journal of econometrics, 189(2), 428-436.
\bibitem{sall2021}\textit{Sall,E.M,Deme,EL.,Diongue,A.K}. Modeling risk via Realized HYGARCH model. arXiv:2104.12409v1 [stat.ME] 26 Apr 2021
\bibitem{shiandyang}\textit{Shi, Y., \& Yang, Y. (2018).}Modeling high frequency data with long memory and structural change: A-HYEGARCH model. Risks, 6(2), 26.
\bibitem{charline}
Uwilingiyimana, Charline, Abdou Kâ Diongue, and Carlos Ogouyandjou. "Adaptive Hyperbolic Asymmetric Power ARCH (A-HY-APARCH) model: Stability and Estimation." Afrika Statistika 15.4 (2020): 2511-2528.



\end{thebibliography}
\end{document}